\newtheorem{theorem}{Theorem}
\newtheorem{lemma}{Lemma}
\newtheorem{remark}{Remark}
\newtheorem{propo}{Proposition}
\newcommand{\be}{\begin{eqnarray}}
\newcommand{\ee}{\end{eqnarray}}
\newcommand{\bpr}{\begin{propo}}
\newcommand{\epr}{\end{propo}}
\newcommand{\ble}{\begin{lemma}}
\newcommand{\ele}{\end{lemma}}
\newcommand{\bpf}{\begin{proof}}
\newcommand{\epf}{\end{proof}}
\newcommand{\ket}[1]{\left | #1 \right\rangle}
\newcommand{\vket}[1]{\left | #1 \right\rangle\rangle}
\newcommand{\bra}[1]{\left \langle #1 \right |}
\newcommand{\vbra}[1]{\langle\left \langle #1 \right |}
\newcommand{\Tr}{\mathrm{Tr}}
\newcommand{\G}{\mathcal{G}}
\newcommand{\Ham}{\mathcal{H}}
\newcommand{\rhoP}{\rho_{\Phi^+}}
\newcommand{\Gp}{G_{\pi}}
\newcommand{\rkp}{\rho_{\textrm{k-prod}}}
\newcommand{\Ll}{\mathcal L}
\newcommand{\ten}{\otimes}
\newcommand{\ts}{\mathcal T}
\newcommand{\non}{\nonumber\\}
\newcommand{\proj}[1]{\ket{#1}\bra{#1}}
\renewcommand{\epsilon}{\varepsilon}
\def\openone{\rm 1 \hspace{-0.9mm}l}
\def\matone{\mathbf 1}
\begin{document}

\title{Unified approach to geometric and positive-map-based non-linear entanglement identifiers}

\author{Marcin Markiewicz} \email{marcinm495@gmail.com}    
\affiliation{Institute of Physics, Jagiellonian University, \L{}ojasiewicza 11,
30-348 Krak\'ow, Poland} 

\author{Adrian Ko\l{}odziejski}
\affiliation{Institute of Theoretical Physics and Astrophysics, Faculty of Mathematics, Physics and Informatics, University of Gda\'nsk, 80-308 Gda\'nsk, Poland}

\author{Zbigniew Pucha{\l}a}
\affiliation{Institute  of  Theoretical  and  Applied  Informatics,
Polish  Academy  of  Sciences,  Ba{\l}tycka  5,  44-100  Gliwice,  Poland}
\affiliation{Institute of Physics, Jagiellonian University, \L{}ojasiewicza 11,
30-348 Krak\'ow, Poland} 

\author{Adam Rutkowski}
\affiliation{Institute of Theoretical Physics and Astrophysics, Faculty of Mathematics, Physics and Informatics, University of Gda\'nsk, 80-308 Gda\'nsk, Poland}

\author{Tomasz Tylec}
\affiliation{Institute of Theoretical Physics and Astrophysics, Faculty of Mathematics, Physics and Informatics, University of Gda\'nsk, 80-308 Gda\'nsk, Poland}

\author{Wies{\l}aw Laskowski}
\affiliation{Institute of Theoretical Physics and Astrophysics, Faculty of Mathematics, Physics and Informatics, University of Gda\'nsk, 80-308 Gda\'nsk, Poland}

\date{\today}


\begin{abstract}
Detecting  quantumness of correlations (especially entanglement) is a very hard task even in the simplest case i.e. two-partite quantum systems. Here we provide an analysis whether there exists a relation between two most popular types of  entanglement identifiers: the first one based on positive maps and not directly applicable in laboratory and the second one --- geometric entanglement identifier which is based on specific Hermiticity-preserving maps. We show a profound relation between those two types of entanglement criteria.   Hereunder we have proposed  a general  framework of nonlinear functional  entanglement identifiers  which allows us to construct new experimentally friendly entanglement criteria. 
\end{abstract}


\maketitle
\section{Introduction}
The problem of efficient characterization of quantum entanglement lies at the heart of quantum information science \cite{Guhne09, HHH09}. It is highly nontrivial even in the simplest case of a bipartite scenario, since whenever the dimension of the entire state space is greater than $6$ there are no universal efficient methods to check, whether a given state is entangled or not \cite{HHH96}. When it comes to a general characterization of bipartite entanglement, there were proposed two seemingly different complete  (necessary and sufficient) entanglement criteria. The first one is the positive map criterion \cite{HHH96}:
\be
\rho \textrm{ is entangled } \iff \exists_{\Lambda} (\openone\otimes \Lambda)\rho \ngeq 0.
\label{mapmain}
\ee
Intuitively this condition is based on the fact, that performing a locally allowed operation on a part of an entangled system may spoil the structure of the entire state, making it unphysical. The most well known map $\Lambda$ is the transposition map $\mathcal T$, which detects entanglement of all two-qubit and qubit-qutrit states \cite{HHH96}.
There has been an attempt to generalize this approach for detection of genuine multipartite entanglement \cite{HHH01, Huber14, Wolk14,  Clivas16}, however no general construction of such criteria, which would be optimal for arbitrary state exists.

The second group of entanglement indicators involves so called geometric criteria \cite{LMPZ11,Laskowski12, Markiewicz13, Laskowski13, Laskowski15}, which in the simplest bipartite case read \cite{BBLPZ08}:
\be
&&\rho \textrm{ is entangled } \iff \nonumber\\
&&\exists_{G|\Tr(\rho G[\rho])\geq 0} \forall_{\rho_1,\rho_2} \Tr(\rho.G[\rho_1\otimes\rho_2])<\Tr(\rho.G[\rho]).
\label{tensormain1}
\ee
These criteria are based on specific Hermiticity-preserving maps $G$, which fulfill a specific positivity  condition $\Tr(\rho G[\rho])\geq 0$, and can be treated as metric tensors.
The above condition is based on approximating the norm of investigated state by a scalar product with any product state in any convenient metric, and in fact it can be treated as the most general implementation of the Hahn-Banach separation theorem.
This group of criteria can be directly generalized to provide a necessary and sufficient condition of arbitrary level of separability (involving genuine multipartite entanglement) for any number $N$ of finite-dimensional systems \cite{LMPZ11}:
\be
&&\rho \textrm{ is not $k$-separable } \iff \nonumber\\
&&\forall_{\pi}\exists_{\Gp|\Tr(\rho \Gp[\rho])\geq 0} \forall_{\rkp} 
\Tr(\rho.\Gp[\rkp])<\Tr(\rho.\Gp[\rho]),\non
\label{tensormulti}
\ee
in which $\pi$ denotes a partition of $N$ systems into $k$ subsystems, $\Gp$ is a metric operator consistent with the partition $\pi$, and $\rkp$ denotes a state which is $k$-product with respect to the partition $\pi$. If we take $k=2$, the condition rejects the possibility that a state $\rho$ is biseparable in some partition, therefore proving that it is genuinely $N$-partite entangled.
The geometric entanglement criteria belong to a wider class of geometric nonlinear entanglement indicators (often called nonlinear entanglement witnesses) \cite{Guhne07, Moroder08, Arrazola12, obwiednie}, however they are distinguished by the fact of completeness.

The question arises whether the above two families of complete bipartite entanglement criteria are in any way related to each other, in a deeper sense from the fact that they define the same sets of states. In this work we explicitly show that such a deep relation exists, namely each entanglement identifier of the form \eqref{tensormain1} can be translated to a positive-map-based criterion \eqref{mapmain}, which detects entanglement of at least the same set of states. In order to obtain the relation, we introduce a much more general framework for entanglement detection based on nonlinear functionals, which itself has no geometric interpretation. Within this scenario we show, that each functional-based nonlinear entanglement identifier can be effectively translated to a positive-map-based criterion, and that the geometrical criteria \eqref{tensormain1} are a subset of such identifiers. The construction of a positive map which corresponds to a functional identifier is effectively one-way, which means that it cannot be effectively reversed. Namely, the construction of a map given nonlinear identifier is simple and straightforward, however finding identifiers which would correspond to a fixed positive map involves solving highly nonlinear systems of equations and therefore cannot be done efficiently in general case. Interestingly, it can be easily done in the case of a positive partial transpose (PPT) map for two-qubit systems, which is distinguished as a universal map detecting enetanglement of arbitrary two-qubits states \cite{Peres96, HHH96}. Further we show that our generalized framework of nonlinear functional entanglement identifiers allows for an easy construction of new experimentally friendly entanglement criteria which are much more efficient than the geometric ones. As a final remark we discuss why the relation between functional criteria and map-based criteria cannot be extended for the case of multipartite entanglement.

\section{General non-linear bipartite entanglement identifier and its positive-map-based version}
\subsection{Functional form of the identifier}
In this section we define within a few steps the most general form of a non-linear entanglement identifier and discuss its properties. Let us assume we deal with a bipartite system of local dimensions $d_A$ and $d_B$ respectively. The starting point of the construction is a (possibly non-linear) map $\G$ acting on the space of matrices of dimension $d_A\cdot d_B$. The only restriction on the map $\G$ is that it must be Hermiticity-preserving. Further let us define a non-linear $2$-form ($2$-argument functional):
\be
\tilde \omega_{\G}(\sigma,\rho) = \Tr(\sigma\,\G[\rho]),
\label{TildeOmega}
\ee
which acts on arbitrary states  on the joint space $\Ham_A\otimes\Ham_B$. Due to the assumption of Hermiticity-preserving of $\G$, the functional $\tilde\omega_{\G}$ takes only real values. Therefore it is meaningful to define the following functional:
\be
\tilde\omega_0(\rho)=\max_{\sigma_{\textrm{sep}}} \tilde \omega_{\G}(\sigma_{\textrm{sep}},\rho),
\label{OmegaMax}
\ee
which intuitively gives the maximal overlap between a given state $\rho$ and a set of separable states. Since $\tilde\omega_{\G}$ is linear in the first argument, the maximization in $\tilde\omega_0$ can be done only over a set of pure product states:
\be
\max_{\sigma_{\textrm{sep}}} \tilde \omega_{\G}(\sigma_{\textrm{sep}},\rho)&=&\max_{\{p_i\},\sigma_{\textrm{prod}}}\sum_i p_i \tilde\omega_{\G}(\sigma_{\textrm{prod}},\rho)\nonumber\\
&=&
 \max_{\sigma_{\textrm{prod}}}\tilde\omega_{\G}(\sigma_{\textrm{prod}},\rho).
\label{OmegaProd}
\ee
Now we introduce the following functional:
\be
\omega_{\G}(\rho)=\tilde\omega_0(\rho)-\tilde\omega_{\G}(\rho,\rho),
\label{DefOmega}
\ee
which turns out to be a general entanglement identifier:
\begin{propo}
For any Hermiticity-preserving map $\G$, the functional $\omega_{\G}(\rho)$ is an entanglement identifier, namely:
\be
\omega_{\G}(\rho)<0\Longrightarrow \rho \textrm{ is entangled. }
\label{MainCond}
\ee
\label{PropMainCond}
\end{propo}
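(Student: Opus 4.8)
The plan is to prove the contrapositive: I will show that if $\rho$ is separable then $\omega_{\G}(\rho)\geq 0$, which is logically equivalent to the stated implication \eqref{MainCond}. The whole argument rests on a single structural observation, namely that the state $\rho$ enters the identifier in two distinct roles---as the fixed argument inside the map $\G[\rho]$ and, separately, as the state against which the overlap is computed---while the maximization defining $\tilde\omega_0$ ranges over \emph{all} separable states. These two facts together make the bound essentially automatic.

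Concretely, I would first fix a separable $\rho$ and keep the second slot of the $2$-form frozen at $\rho$, treating $\tilde\omega_{\G}(\,\cdot\,,\rho)$ as a fixed real linear functional on the first argument (its real-valuedness being guaranteed by the Hermiticity-preservation of $\G$, as already noted after \eqref{TildeOmega}). The decisive step is then to observe that, because $\rho$ is itself separable, it is an admissible competitor in the maximization $\tilde\omega_0(\rho)=\max_{\sigma_{\textrm{sep}}}\tilde\omega_{\G}(\sigma_{\textrm{sep}},\rho)$. Selecting in particular $\sigma_{\textrm{sep}}=\rho$ yields the inequality $\tilde\omega_0(\rho)\geq\tilde\omega_{\G}(\rho,\rho)$, whence $\omega_{\G}(\rho)=\tilde\omega_0(\rho)-\tilde\omega_{\G}(\rho,\rho)\geq 0$. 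Contraposition immediately gives that $\omega_{\G}(\rho)<0$ forces $\rho$ outside the separable set, i.e.\ $\rho$ is entangled.

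I do not expect a genuine obstacle in establishing the stated implication, since its content is really encoded in the way the functional is constructed: once the two appearances of $\rho$ are disentangled, the bound follows in one line and does not even use the reduction to pure product states from \eqref{OmegaProd}. The more delicate questions---whether a given $\G$ actually attains a \emph{negative} value of $\omega_{\G}$ on some entangled state, so that the identifier is nontrivial, and how the choice of $\G$ governs which entangled states are detected---fall outside this one-directional sufficiency claim and would be the natural subject of any subsequent quantitative treatment.
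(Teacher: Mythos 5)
Your proof is correct and takes essentially the same route as the paper's: both reduce to the observation that a separable $\rho$ cannot exceed the maximum defining $\tilde\omega_0(\rho)$, so $\omega_{\G}(\rho)\geq 0$ on separable states. The only cosmetic difference is that you argue by contraposition using the separable-state form of the maximum (so $\rho$ is itself an admissible competitor), whereas the paper argues by contradiction via the equivalent maximum over pure product states from \eqref{OmegaProd}.
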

\begin{proof}
Indeed, $\omega_{\G}(\rho)<0$ implies that:
\be
\max_{\sigma_{\textrm{prod}}}\tilde\omega_{\G}(\sigma_{\textrm{prod}},\rho)<\tilde\omega_{\G}(\rho,\rho),
\ee
which cannot hold for any separable state $\rho$.
\end{proof}
\begin{remark}
Note that the condition \eqref{MainCond} is purely algebraic, and in principle need not have any geometric interpretation. 
\end{remark}
\begin{remark}
Note that the functional $\omega_{\G}$ is nonlinear even for a linear map $\G$ since the maximization present in \eqref{DefOmega} is a highly nonlinear operation.
\end{remark}

\subsection{Positive-map-based condition corresponding to an identifier}
In full analogy to the case of ordinary entanglement witnesses \cite{Guhne03, Wolk14} each entanglement identifier of the form \eqref{MainCond} gives rise to a positive map, which detects entanglement of at least the same set of states as the identifier. The difference is that in our construction the map explicitly depends on the state on which the functional identifier acts. To see this let us rewrite the condition \eqref{DefOmega} in a witness-like form:
\be
\omega_{\G}(\rho)=\tilde\omega_0(\rho)-\Tr(\rho\G[\rho])=\Tr\left(\rho(\tilde\omega_0(\rho)\matone-\G[\rho])\right),\nonumber\\
\label{DefOmega1}
\ee
and let us introduce new Hermiticity-preserving map $W_{\G}[\rho]=\tilde\omega_0(\rho)\matone-\G[\rho]$. Since now the condition \eqref{MainCond}
takes the following form:
\be
\Tr(\rho W_{\G}[\rho])<0,
\label{MainCondW}
\ee
the operator $W_{\G}[\rho]$ (seen as a state $\rho$ transformed by the map $W_{\G}$) can be seen as a nonlinear state-dependent entanglement witness. Now the standard relation between entanglement witnesses and maps can be used in the following way \cite{HHH96}. Let $\{\sigma_i^A\}$ and $\{\sigma_i^B\}$ be two Hermitian orthonormal operator bases with normalization given by $\Tr(\sigma_i\sigma_j)=2\delta_{ij}$. Let us denote by $w^{\G,\rho}_{ij}$ matrix elements of $W_{\G}[\rho]$ in the introduced bases:
\be
w^{\G,\rho}_{ij}=\Tr(W_{\G}[\rho] \sigma_i^A\otimes\sigma_j^B).
\label{WCoef}
\ee
Using these coefficients we can represent the operator $W_{\G}[\rho]$ in the following form:
\be
W_{\G}[\rho]=\frac{1}{4}\sum_{ij}w^{\G,\rho}_{ij}\sigma_i^A\otimes\sigma_j^B,
\ee
in which the coefficient $\tfrac{1}{4}$ arises as a compensation for the nontrivial basis normalization $\Tr(\sigma_i\sigma_j)=2\delta_{ij}$.
We will use the following simple version of the celebrated Choi-Jamio\l{}kowski isomorphism \cite{Jamiolkowski72, Choi75, Choi82} (for an intuitive introduction to the subject see Appendix \ref{ApChoi}):
\ble
For any local Hermitian operator bases $\{\sigma_i^A\}$ and $\{\sigma_i^B\}$ 
we define  a linear map $\Lambda_{\G,\rho}$ as
\be
\Lambda_{\G,\rho}[\lambda]=\sum_{i,j}\frac{1}{4}w^{\G,\rho}_{ij}\,\Tr\left(\sigma^A_i\,\lambda\,\right)\sigma_j^B,
\label{Lmap}
\ee
then the following identity holds:
\begin{equation}\label{MainIso}
W_{\G}[\rho]=
\left(
\openone \otimes \Lambda_{\G,\rho}
\right)
\!\left[\frac{1}{2}\sum_{m}\sigma^A_m\ten \sigma^A_m\right].
\end{equation}
\ele
\begin{proof}
\be
&&
\left(\openone \otimes \Lambda_{\G,\rho}\right)
\!\left[\frac{1}{2}\sum_{m}\sigma^A_m\ten \sigma^A_m\right] \\
&&=\frac{1}{2}\sum_m\left(\sigma_m^A\ten\sum_{i,j}\frac{1}{4}w^{\G,\rho}_{ij}\,\Tr\left(\sigma^A_i\,\sigma_m\,\right)\sigma_j^B\right)\non
&&=\frac{1}{2}\sum_m\left(\sigma_m^A\ten\sum_{i,j}2\delta_{im}\frac{1}{4}w^{\G,\rho}_{ij}\sigma_j^B\right)\non
&&=\sum_{ij}\left(\frac{1}{4}w^{\G,\rho}_{ij}\sigma_i^A\ten\sigma_j^B\right)=W_{\G}[\rho].
\ee
\end{proof}
Note that the map $\Lambda_{\G,\rho}$ is positive, which holds due to the Choi's theorem. Choi's theorem \cite{Choi75} states, that a map $\Lambda$ is positive if and only if its matrix representation $\rho_{\Lambda}$ given by  $\rho_{\Lambda}=(\openone\ten\Lambda)\rho_{\Phi^+}$ is block-positive, in which the operator $\rho_{\Phi^+}$ is given by:
\be
\rho_{\Phi^+}=\sum_m \sigma^A_m\ten\left(\sigma^A_m\right)^T.
\label{rofi}
\ee
This means that:
$\Tr(\rho_1\ten\rho_2 \rho_{\Lambda})\geq0$ for any pure states $\rho_1$ and $\rho_2$. The operator $W_{\G}[\rho]$ is block-positive for any $G$ and $\rho$, since:
\be
&&\Tr(\rho_1\ten\rho_2 W_{\G}[\rho])=\Tr(\rho_1\ten\rho_2 (\tilde\omega_0(\rho)\openone-\G[\rho]))\non
&&=\tilde\omega_0(\rho)-\Tr(\rho_1\ten\rho_2 \G[\rho])\geq 0.
\ee
The last inequality holds due to the definition of $\tilde\omega_0(\rho)$ \eqref{OmegaMax}. This implies that the map $\Lambda_{G,\rho}$ is a positive map for any $\G$ and $\rho$.

Before proceeding further we would need to introduce three further technical tools. 
Firstly, we need a notion of a dual map, which allows one to switch the action 
of the map inbetween two operators under trace. We call a map
$\Lambda^{\textrm d}\in\Ll(\Ll(\Ham_B),\Ll(\Ham_A))$ \emph{dual} to the map 
$\Lambda\in\Ll(\Ll(\Ham_A), \Ll(\Ham_B))$ 
if is is dual with respect to Hilbert Schmidt inner product, i.e.:
\be \!\!\!\!\!\!\!
\forall_{\alpha_1\in\Ll(\Ham_B),\alpha_2\in  \Ll(\Ham_A)} 
\Tr(\alpha_1^* \Lambda[\alpha_2])=\Tr((\Lambda^{\textrm d}[\alpha_1])^*\alpha_2).\non
\label{rdual}
\ee
The proof of the following proposition is postponed to the Appendix 
\ref{proofLdual}.
\begin{propo}
A map dual to \eqref{Lmap} is given by the following formula:
\be
\Lambda_{\G,\rho}^d[\lambda]=\sum_{ij} w_{ij}^{\G,\rho}\,\Tr\left(\sigma^B_j\lambda\right)\sigma^{A}_i.
\label{Ldual}
\ee
\label{propLdual}
\end{propo}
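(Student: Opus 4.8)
The plan is to verify directly the defining identity \eqref{rdual} of the dual map by inserting the explicit forms of $\Lambda_{\G,\rho}$ and of the proposed $\Lambda_{\G,\rho}^d$ and checking that both sides reduce to the same scalar. Fix arbitrary operators $\alpha_1\in\Ll(\Ham_B)$ and $\alpha_2\in\Ll(\Ham_A)$. First I would substitute \eqref{Lmap} into the left-hand side $\Tr(\alpha_1^*\Lambda_{\G,\rho}[\alpha_2])$ and, using linearity and the fact that the inner traces $\Tr(\sigma^A_i\alpha_2)$ are scalars, pull them out of the outer trace to obtain a double sum of the form $\sum_{ij} \tfrac14\, w^{\G,\rho}_{ij}\,\Tr(\sigma^A_i\alpha_2)\,\Tr(\alpha_1^*\sigma^B_j)$.

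Next I would substitute the candidate \eqref{Ldual} into the right-hand side $\Tr((\Lambda_{\G,\rho}^d[\alpha_1])^*\alpha_2)$, taking the conjugation $*$ inside the sum. The key algebraic inputs here are: (i) the coefficients $w^{\G,\rho}_{ij}$ are real, since by \eqref{WCoef} each is the trace of a product of the Hermitian operator $W_{\G}[\rho]$ with the Hermitian product operator $\sigma^A_i\ten\sigma^B_j$; and (ii) the local basis operators are Hermitian, so $(\sigma^A_i)^*=\sigma^A_i$. With these, $(\Lambda_{\G,\rho}^d[\alpha_1])^*=\sum_{ij} w^{\G,\rho}_{ij}\,\overline{\Tr(\sigma^B_j\alpha_1)}\,\sigma^A_i$, and tracing against $\alpha_2$ yields $\sum_{ij} w^{\G,\rho}_{ij}\,\overline{\Tr(\sigma^B_j\alpha_1)}\,\Tr(\sigma^A_i\alpha_2)$.

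Finally I would close the argument with the elementary identity $\overline{\Tr(\sigma^B_j\alpha_1)}=\Tr((\sigma^B_j\alpha_1)^*)=\Tr(\alpha_1^*\sigma^B_j)$, which uses $(AB)^*=B^*A^*$ together with the Hermiticity $(\sigma^B_j)^*=\sigma^B_j$. Substituting this turns the right-hand side into the same double sum obtained for the left-hand side (once the numerical prefactors carried by \eqref{Lmap} and \eqref{Ldual} are matched), establishing \eqref{rdual} for all $\alpha_1$ and $\alpha_2$ and hence the claimed formula. The main obstacle is not the manipulation itself but the careful bookkeeping of the conjugation $*$: one must ensure that the reality of $w^{\G,\rho}_{ij}$ and the Hermiticity of the bases combine so that the Hermitian conjugate appearing on the right exactly reproduces the conjugate sitting on the left, and that the scalar prefactors are tracked consistently throughout.
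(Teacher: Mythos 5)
Your proof is correct and follows essentially the same route as the paper's: direct verification of the Hilbert--Schmidt duality identity by substituting both explicit forms and reducing each side to the double sum $\sum_{ij}w^{\G,\rho}_{ij}\,\Tr(\alpha_1^{*}\sigma^B_j)\,\Tr(\sigma^A_i\alpha_2)$. Your additional bookkeeping of the conjugation (via the reality of $w^{\G,\rho}_{ij}$ and the Hermiticity of the local bases) and your observation that the prefactor $\tfrac{1}{4}$ from \eqref{Lmap} must also appear in \eqref{Ldual} for the two sides to match are both points the paper's own proof silently glosses over; the latter is indeed an inconsistency in the proposition as stated, corrected in the later formula \eqref{MainMap}.
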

Secondly we would need a property connected with taking the dual of a map 
composed with the transposition map $\mathcal T$. Since 
$(\Lambda_1\circ\Lambda_2)^d=\Lambda_2^d\circ\Lambda_1^d$, we obtain the 
following fact:
\begin{propo}
For any linear map $\Lambda$ the following property holds
\be
(\Lambda\circ\ts)^d=\ts\circ\Lambda^{\textrm d}.
\ee
\label{propTdual}
\end{propo}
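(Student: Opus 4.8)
The plan is to reduce the claim to the single fact that the transposition map is its own dual. First I would invoke the composition rule for duals quoted immediately above the statement, namely $(\Lambda_1\circ\Lambda_2)^d=\Lambda_2^d\circ\Lambda_1^d$, specialised to $\Lambda_1=\Lambda$ and $\Lambda_2=\ts$. This at once yields $(\Lambda\circ\ts)^d=\ts^d\circ\Lambda^{\textrm d}$, so the whole proposition follows provided I can establish the single identity $\ts^d=\ts$, i.e.\ that transposition coincides with its own dual with respect to the Hilbert--Schmidt inner product.

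To prove $\ts^d=\ts$ I would return to the defining relation \eqref{rdual} and test the candidate $\ts^d=\ts$ directly. Since the Hilbert--Schmidt pairing is nondegenerate, the dual is uniquely determined by \eqref{rdual}, so it suffices to verify that $\ts$ itself satisfies the defining identity in place of $\ts^d$, that is $\Tr(\alpha_1^*\,\ts[\alpha_2])=\Tr((\ts[\alpha_1])^*\,\alpha_2)$ for all $\alpha_1,\alpha_2$. Writing $\ts[\alpha]=\alpha^T$ and expanding the left-hand side, I would use the elementary facts that transposition preserves the trace, $\Tr(M^T)=\Tr(M)$, and reverses products, $(AB)^T=B^TA^T$, together with the interplay between transposition and the adjoint $*$ entering the pairing. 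A short index-level computation then matches the two sides, confirming by uniqueness that $\ts^d=\ts$.

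Combining the two steps gives $(\Lambda\circ\ts)^d=\ts^d\circ\Lambda^{\textrm d}=\ts\circ\Lambda^{\textrm d}$, which is exactly the assertion. I expect no genuine obstacle: the only point requiring care is the bookkeeping of complex conjugation versus transposition inside $\Tr(\alpha_1^*\cdots)$, since a misplaced conjugation would spuriously produce an antilinear ``$\overline{\ts}$'' rather than $\ts$. Keeping explicit track of whether $*$ denotes entrywise conjugation or the full adjoint, and using that transposition and complex conjugation commute, removes this ambiguity and closes the argument.
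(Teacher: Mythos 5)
Your proposal is correct and follows essentially the same route as the paper, which simply invokes the anti-homomorphism rule $(\Lambda_1\circ\Lambda_2)^d=\Lambda_2^d\circ\Lambda_1^d$ and leaves the self-duality $\ts^d=\ts$ implicit. Your explicit verification of $\ts^d=\ts$ from the defining relation \eqref{rdual} is sound (both sides reduce to $\Tr(\bar{\alpha}_1\alpha_2)$) and merely fills in a step the paper takes for granted.
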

Finally we need a characterization of the \emph{maximally entangled operator} $\sum_m\sigma^A_m\otimes\sigma^A_m$ that appears in the isomorphism \eqref{MainIso}:
\begin{propo}
Let $\{\sigma_i\}$ be a collection of Hermitian matrices which forms an 
orthonormal basis on the Hilbert-Schmidt space of $d\times d$ matrices. Then 
the following relation holds:
\be
\sum_{m}\sigma_m\ten \sigma_m=(\openone\ten \ts)\rho_{\Phi^+},
\label{fiplusone}
\ee
in which $\rho_{\Phi^+}$ \eqref{rofi} is a projector onto an unnormalized \emph{maximally entangled state} $\ket{\Phi^+}=\sum_i\ket{ii}$ in dimension $d$.
\label{propFiPlus}
\end{propo}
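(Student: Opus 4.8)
The plan is to split the statement into its two assertions: the operator identity \eqref{fiplusone} itself, and the accompanying claim that $\rhoP$ is a projector onto the unnormalized maximally entangled state $\ket{\Phi^+}=\sum_i\ket{ii}$. The first is essentially a bookkeeping step, whereas the second carries the real content and rests on the completeness of the basis $\{\sigma_m\}$ in the Hilbert--Schmidt space.

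For the identity \eqref{fiplusone} I would start directly from the definition \eqref{rofi}, $\rhoP=\sum_m\sigma_m\ten\sigma_m^T$, and apply the partial transposition $(\openone\ten\ts)$ term by term. Since each $\sigma_m$ is Hermitian, transposing the second factor once more simply returns $\sigma_m$, so that $(\openone\ten\ts)\rhoP=\sum_m\sigma_m\ten\sigma_m$. This part is immediate once the transpose conventions are fixed.

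The substantive step is to verify $\rhoP=\proj{\Phi^+}$, which I would do by comparing matrix elements in the computational basis. On the one hand $\proj{\Phi^+}=\sum_{ij}\ket{i}\bra{j}\ten\ket{i}\bra{j}$, so its $(ab,cd)$ entry is $\delta_{ab}\delta_{cd}$; on the other hand the $(ab,cd)$ entry of $\sum_m\sigma_m\ten\sigma_m^T$ is $\sum_m(\sigma_m)_{ac}(\sigma_m)_{db}$. The two agree precisely by the completeness relation $\sum_m(\sigma_m)_{ac}(\sigma_m)_{db}=\delta_{ab}\delta_{cd}$, which I would obtain from Parseval's identity $\sum_m\overline{(\sigma_m)_{ij}}(\sigma_m)_{kl}=\delta_{ik}\delta_{jl}$ for the orthonormal basis $\{\sigma_m\}$ of the $d^2$-dimensional matrix space, after using Hermiticity to rewrite $\overline{(\sigma_m)_{ij}}=(\sigma_m)_{ji}$ and then relabelling indices.

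The only place that needs care --- and the main obstacle, such as it is --- is the index and conjugation bookkeeping: one must track exactly where the transpose acts and keep the relation $\sigma_m^\dagger=\sigma_m$ (equivalently $\overline{(\sigma_m)_{ij}}=(\sigma_m)_{ji}$) consistent, so that the raw Parseval relation is converted into the precise index pattern required. As a consistency check I would confirm the normalization $\sum_m(\Tr\sigma_m)^2=d$, which matches $\braket{\Phi^+}{\Phi^+}=d$ and follows from expanding $\openone$ in the basis $\{\sigma_m\}$.
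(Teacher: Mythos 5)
Your proof is correct and follows essentially the same route as the paper: both reduce the claim to $\sum_m\sigma_m\otimes\sigma_m^T=\rhoP$ and verify it by comparing computational-basis matrix elements via the completeness (Parseval) relation for the orthonormal Hermitian basis, with Hermiticity converting $\overline{(\sigma_m)_{ij}}$ into $(\sigma_m)_{ji}$. One cosmetic remark: in the first step Hermiticity is not actually needed, since $(\sigma_m^T)^T=\sigma_m$ holds for any matrix; the place where Hermiticity genuinely enters is the index conversion of the Parseval identity, which you handle correctly.
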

\begin{proof}
See Appendix \ref{proofFiPlus}.
\end{proof}
Now we are ready to present a transition from a functional entanglement identifier \eqref{MainCondW} to a positive-map condition in a concise way. Firstly using the introduced notation we can rewrite \eqref{MainIso} in the following way:
\be
W_{\G}[\rho]=\left(\openone\otimes\Lambda_{\G,\rho}\circ\mathcal T\right)[\rhoP],
\label{W1}
\ee
in which we put the factor of $\tfrac{1}{2}$ into the unnormalized state  $\rhoP$. Then the entanglement identifier \eqref{MainCondW} reads:
\be
\Tr\left(\rho \left(\openone\otimes\Lambda_{\G,\rho}\circ\mathcal T\right)[\rhoP]\right)<0,
\label{W2}
\ee
and after taking the dual maps we obtain:
\be
\Tr\left(\rhoP \left(\openone\otimes\mathcal T\circ\Lambda^{\textrm d}_{\G,\rho}\right)[\rho]\right)<0.
\label{W3}
\ee
Since $\rhoP$ is a one-dimensional projector, and the map $\mathcal T\circ\Lambda^{\textrm d}_{\G,\rho}$ is positive, the condition \eqref{W2} \emph{implies} that:
\be
\left(\openone\otimes\mathcal T\circ\Lambda^{\textrm d}_{\G,\rho}\right)[\rho]\ngeq 0.
\ee
All the discussion can be summarized as follows:
\begin{theorem}
If an entanglement identifier $\omega_{\G}[\rho]$ \eqref{DefOmega} detects bipartite entanglement of a state $\rho$ for some Hermiticity-preserving map $\G$, then a positive map $\mathcal T\circ \Lambda^{\textrm d}_{\G,\rho}$ also detects entanglement of $\rho$ via condition:
\be
\left(\openone\otimes\mathcal T\circ\Lambda^{\textrm d}_{\G,\rho}\right)[\rho]\ngeq 0.
\label{MainCondMap}
\ee
Given orthonormal Hermitian bases $\{\sigma^A_i\}$ and $\{\sigma^B_i\}$ the map $\Lambda^{\textrm d}_{\G,\rho}$ can be explicitly expressed as:
\be
\Lambda_{\G,\rho}^d[\lambda]=\sum_{ij} \frac{1}{4} w_{ij}^{\G,\rho}\,\Tr\left(\sigma^B_j\lambda\right)\sigma^{A}_i,
\label{MainMap}
\ee
in which $w_{ij}^{\G,\rho}$ are matrix elements of an operator $\tilde\omega_0(\rho)\matone-\G[\rho]$.
\end{theorem}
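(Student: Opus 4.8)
The plan is to read the functional detection condition $\omega_{\G}(\rho)<0$ as a non-positivity statement for a state transformed by a positive map, assembling the witness-to-map correspondence prepared in the preceding propositions. First I would recast the identifier in witness form: by \eqref{DefOmega1} we have $\omega_{\G}(\rho)=\Tr(\rho\,W_{\G}[\rho])$ with $W_{\G}[\rho]=\tilde\omega_0(\rho)\matone-\G[\rho]$, so that detection is exactly $\Tr(\rho\,W_{\G}[\rho])<0$, i.e.\ condition \eqref{MainCondW}. Next I would realize $W_{\G}[\rho]$ as the Choi image of $\Lambda_{\G,\rho}$ through the identity \eqref{MainIso}, noting that $\Lambda_{\G,\rho}$ is positive: this follows from Choi's theorem together with the block-positivity $\Tr(\rho_1\otimes\rho_2\,W_{\G}[\rho])=\tilde\omega_0(\rho)-\Tr(\rho_1\otimes\rho_2\,\G[\rho])\geq 0$, guaranteed by the defining maximization \eqref{OmegaMax} of $\tilde\omega_0$. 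Using Proposition~\ref{propFiPlus} I would then trade the maximally entangled operator for the projector $\rhoP$, rewriting $W_{\G}[\rho]=(\openone\otimes\Lambda_{\G,\rho}\circ\mathcal T)[\rhoP]$ as in \eqref{W1}.

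The core step is transferring the maps off $\rhoP$ and onto $\rho$. Substituting \eqref{W1} into \eqref{MainCondW} yields \eqref{W2}; I would then apply the Hilbert--Schmidt duality \eqref{rdual}, using the explicit dual of Proposition~\ref{propLdual} and the commutation rule $(\Lambda\circ\ts)^d=\ts\circ\Lambda^{\textrm d}$ of Proposition~\ref{propTdual}, to move the composite map onto $\rho$ and arrive at $\Tr(\rhoP\,(\openone\otimes\mathcal T\circ\Lambda^{\textrm d}_{\G,\rho})[\rho])<0$, i.e.\ \eqref{W3}. Since $\rhoP$ is a rank-one projector, a strictly negative overlap forces the transformed operator to possess a negative eigenvalue, so that $(\openone\otimes\mathcal T\circ\Lambda^{\textrm d}_{\G,\rho})[\rho]\ngeq 0$; and because $\mathcal T$ is positive and the dual of the positive map $\Lambda_{\G,\rho}$ is again positive, $\mathcal T\circ\Lambda^{\textrm d}_{\G,\rho}$ is genuinely positive, which is precisely \eqref{MainCondMap}. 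The explicit coefficient form \eqref{MainMap} then follows by inserting the matrix elements $w^{\G,\rho}_{ij}$ of $\tilde\omega_0(\rho)\matone-\G[\rho]$ into the dual map.

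I expect the main obstacle to be the normalization bookkeeping in this final transfer: one must simultaneously track the factor $\tfrac12$ absorbed into $\rhoP$, the basis normalization $\Tr(\sigma_i\sigma_j)=2\delta_{ij}$, and the placement of $\mathcal T$, so that the dual map emerges with the correct $\tfrac14$ prefactor in \eqref{MainMap}. A secondary subtlety is that the duality \eqref{rdual} carries a complex conjugation, harmless here only because $\rho$ and $\rhoP$ are Hermitian---a point I would verify explicitly before taking the duals.
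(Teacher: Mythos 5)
Your proposal follows essentially the same route as the paper: witness form \eqref{MainCondW}, the Choi-type identity \eqref{MainIso} with positivity of $\Lambda_{\G,\rho}$ from block-positivity of $W_{\G}[\rho]$, the rewriting via Proposition~\ref{propFiPlus}, the dual-map transfer using Propositions~\ref{propLdual} and~\ref{propTdual} to reach \eqref{W3}, and the rank-one projector argument for \eqref{MainCondMap}. The argument is correct, and your closing remarks on normalization bookkeeping and the complex conjugation in \eqref{rdual} are reasonable points of care rather than gaps.
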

Two comments are necessary here. Firstly the positive map $\mathcal T\circ \Lambda^{\textrm d}_{\G,\rho}$ can in principle detect entanglement of a broaden class of states than the functional $\omega_{\G}$, which means that there can exist a state $\lambda$ such that $\omega_{\G}[\lambda]\geq 0$, but $\left(\openone\otimes\mathcal T\circ\Lambda^{\textrm d}_{\G,\rho}\right)[\lambda]\ngeq 0$. We will give an instructive example in the following sections. Secondly although the map $\Lambda^{\textrm d}_{\G,\rho}$ is linear, the entanglement condition \eqref{MainCondMap} is not linear, since for a given fixed $\G$ the map $\Lambda^{\textrm d}_{\G,\rho}$ is explicitly adjusted to the state it acts on.

\subsection{Experimentally friendly form of an entanglement identifier}
We call an entanglement identifier \emph{experimentally friendly} if it can be efficiently evaluated using only local measurements on an investigated state. 
Any bipartite quantum state $\rho$ can be expressed in local Hermitian bases $\{\sigma^A_i\}$ and $\{\sigma^B_j\}$ fulfilling $\Tr(\sigma^A_i\sigma^A_j)=\Tr(\sigma^B_i\sigma^B_j)=2\delta_{ij}$ in the following form:
\be
\rho=\frac{1}{4}\sum_{ij} T_{ij}\sigma^A_i\ten\sigma^B_j,
\label{ctensor}
\ee 
in which the normalization is chosen such that the coefficients $T_{ij}$ fulfill $-1\geq T_{ij}\geq 1$ whether at least one index is non zero. $T_{00}=\tfrac{2}{d}$ is chosen such that $\Tr(\rho)=1$ and the prefactor of $4$ is chosen to compensate for the factor $2$ in the normalization of the bases.
The set of coefficients $\{T_{ij}\}$ is called a \emph{correlation tensor} of a state $\rho$, since it represents average values of correlations:
\be
T_{ij}=\Tr(\rho \sigma^A_i\otimes\sigma^B_j),
\ee
and transforms as a tensor under local unitary operations done on subsystems. The entanglement identifier $\omega_{\G}$ is experimentally friendly if and only if $\G[\rho]$ can be effectively expressed as some function of elements of a correlation tensor $T_{ij}$ of $\rho$. This holds for example whenever the map $\G$ is linear, since then:
\be
\label{GLin}
\G[\rho]=\frac{1}{4}\sum_{ij} T_{ij}\G(\sigma^A_i\ten\sigma^B_j).\non
\ee
However, any linear transformation of a simple tensor $\sigma^A_i\ten\sigma^B_j$ can be expressed in terms of a tensor operator $\hat G$, which has the following form in chosen coordinates:
\be
\G(\sigma^A_i\ten\sigma^B_j)=\sum_{kl}G^{ij}_{kl}\sigma^A_k\ten\sigma^B_l.
\label{GLin1}
\ee
Joining \eqref{GLin} and \eqref{GLin1} we obtain:
\be
\label{GLinFin}
\G[\rho]&=&\frac{1}{4}\sum_{ijkl} T_{ij}G^{ij}_{kl}\sigma^A_k\ten\sigma^B_l\non
&=&\frac{1}{4}\sum_{kl} \left(\sum_{ij}G^{ij}_{kl}T_{ij}\right)\sigma^A_k\ten\sigma^B_l.
\ee
In the above formula we can think of a tensor operator $\hat G$ as acting on the correlation tensor $\hat T$ of the state $\rho$. This is because our operator basis $\sigma^A_k\ten\sigma^B_l$ is orthonormal and can be thought of as a \emph{Cartesian} basis in the space of operators on the tensor product of two Hilbert spaces. Therefore we do not need to distinguish between covariant and contravariant coordinates, and both the coefficients of a tensor and the basis elements transform in the same way.

If the map $\G$ is not linear the operator $\G[\rho]$ cannot be in general presented as a function of the coefficients $T_{ij}$. 
On the other hand we can construct a nonlinear map $\G$ which leads to experimentally friendly entanglement indentifier. We propose the following form of such a map:
\be
\label{GNonLin}
\G[\rho]=\frac{1}{4}\sum_{kl} \left(\sum_{ij}G^{ij}_{kl}f(T_{ij})\right)\sigma^A_k\ten\sigma^B_l,
\ee
in which $f(x)$ is an arbitrary real function applied elementwise to the tensor $\hat T$. We will show that such a construction leads to an extremely useful experimentally friendly entanglement identifiers.

\section{Examples}

\subsection{Linear geometrical entanglement criteria}

In this section we will use the introduced formalism for a reconsideration of the so called \emph{geometrical criteria for entanglement} introduced in \cite{BBLPZ08} and further developed in a series of works \cite{LMPZ11,Laskowski12, Markiewicz13, Laskowski13, Laskowski15}. All the criteria of this type are based on an entanglement identifier \eqref{DefOmega} generated by the map $\G$ which is linear \eqref{GLinFin} and which fulfills the following condition:
\be
\label{GPos}
\Tr(\rho\G[\rho])\geq 0,
\ee 
which should hold for all states $\rho$.
In the case when condition~\eqref{GPos} is satisfied for all Hermitian matrices,
it allows one to treat the above expression as a seminorm 
$||\rho||_{\G}=\Tr(\rho\G[\rho])$, which induces a \emph{pseudometric}:
\be
d_{\G}(\rho, \sigma)=||\rho-\sigma||_{\G}.
\label{GMetric}
\ee
The prefixes \emph{semi} and \emph{pseudo} denote the fact, that in these type of criteria one do not impose the condition:
\be
||\rho||_{\G}=0 \Longrightarrow \rho=\textbf{0},
\label{seminorm}
\ee
which means that a non-zero matrix can have zero norm. The metric 
\eqref{GMetric} generated by the linear map $\G$ \eqref{GPos} can be seen as a 
generalization of the Hilbert-Schmidt metric, whereas the 2-form 
$\tilde\omega_{\G}$ \eqref{TildeOmega} can be treated as an inner product. 
When the condition~\eqref{GPos} is satisfied for all Hermitian matrices, then 
$\G$ must be Hermiticity preserving and the form:
\begin{equation}
\Tr (X^* \G[X]) 
\end{equation}
is positive for any matrix $X$. This follows from linearity and decomposition 
of any matrix $X$ to the sum of Hermitian and anti-Hermitian matrix.
The above considerations lead to the conclusion, that $\G$ treated as a linear 
operator must be positive semidefinite.
The map $\G$ \eqref{GPos} itself in its tensor representation $\hat G$ 
\eqref{GLinFin} can be treated as a \emph{metric tensor}, however taken in the 
framework of multilinear algebra and not differential geometry. This is because 
in our construction the concrete choice of the linear map $\G$, which is 
positive in the sense of \eqref{GPos}, determines the global distance between 
any two vectors \eqref{GMetric}, whereas in differential geometry one starts 
from a metric tensor which determines an infinitesimal distance and this 
distance has to be extended to a global distance via integration.

The entanglement identifier \eqref{DefOmega} based on map \eqref{GPos} has a simple geometric interpretation. Namely, the entanglement condition \eqref{MainCond} can be explicitly expressed as:
\be
\max_{\sigma_{\textrm{prod}}}\tilde\omega_{\G}(\sigma_{prod},\rho)<\tilde\omega_{\G}(\rho,\rho),
\label{MainCondGLin}
\ee
which means that if some state $\rho$, treated here as a vector, follows the property that its inner product with any of the extremal elements of some convex set is strictly lower than the inner product of $\rho$ with itself, then the vector $\rho$ cannot be an element of the convex set. The convex set under consideration is typically chosen as the set of separable states, however it can be as well chosen in a different way, as for example the set of PPT states \cite{PPTStates} (which are all states with positive partial transpose).

Although the entanglement condition \eqref{MainCondGLin} is based on a linear map $\G$, and it can be presented in a witness-like form \eqref{MainCondW} it is in fact manifestly nonlinear due to two reasons. Firstly the RHS of \eqref{MainCondGLin} is for any linear $\G$ a quadratic form of matrix elements of the state $\rho$, secondly, the LHS of \eqref{MainCondGLin} is a maximization of the inner product with respect to extremal elements of some convex set, which is in principle a highly non-linear operation. Former works on metric entanglement criteria lacked direct visualization of this fact, therefore we provide it here. 
Let us assume that $\G[\rho]$ is determined via formula \eqref{GLinFin} by the so called \emph{standard improper} metric tensor $\hat G$, with elements defined by:
\be
G^{ij}_{kl}=\delta_{ik}\delta_{jl}[i,j\neq 0],
\label{StdMet}
\ee
in which $[p]$ denotes the logical value of proposition $p$. In simple words the metric tensor \eqref{StdMet} acting on the state determined by a correlation tensor $\hat T$ cancels all the local averages and keeps unchanged full two-point correlations. We consider two classes of states. The first is the class of two-parameter Bell-diagonal states of two qubits of the form:
\be
\rho_{BD}=a\proj{\Phi^+}+b\proj{\Phi^-}+\frac{(1-a-b)}{4}\openone\non,
\label{BellStates}
\ee
whereas the second class is the 3-parameter class of two-qubit states, the correlation tensor of which contains as nonzero terms only the diagonal full correlation terms $T_{11}=p$, $T_{22}=q$ and $T_{33}=r$. The second class of states, known as Weyl states, was thoroughly discussed in \cite{HH96}, whereas the first class is actually the subclass of it, which we discuss for the sake of simple geometrical visualization (see also \cite{Moroder08}). Let us first discuss the case of Bell diagonal states \eqref{BellStates}. This family of states has two parameters, which we represent as two orthogonal directions on the plane (see Fig. \ref{BDFig}). The set of all physical (properly normalized states), fulfilling $a+b\leq 1$ is represented by the right angle triangle, whereas the set of separable states from this class (which can be verified with PPT condition) is represented by the internal deltoid. The entanglement identifier \eqref{MainCondGLin} with standard metric tensor \eqref{StdMet} is in this case given by a quadratic condition $a+b+2ab-3(a^2+b^2)<0$, which geometrically gives an ellipse which circumscribes the deltoid of separable states. The condition \eqref{MainCondGLin} detects entanglement of states that lie outside the ellipse.
Let us now discuss the second family of states. The three  diagonal elements of the correlation tensor, which parametrize the family, can be represented as three orthogonal directions in space (see Fig. \ref{WSFig}). The set of all physical states is represented by the shaded tetrahedron, whereas the set of separable states is depicted by the inscribed octahedron. In this case the entanglement identifier \eqref{MainCondGLin} with the standard metric evaluates to a more nonlinear expression of the form:
\be
p^2+q^2+r^2-\max(|p|,|q|,|r|)<0,
\label{WStatesCond}
\ee
which is represented by the orange curved surface. The states which lie outside this surface are recognized as entanglement by the indicator.

\begin{figure}
\includegraphics[width=\columnwidth]{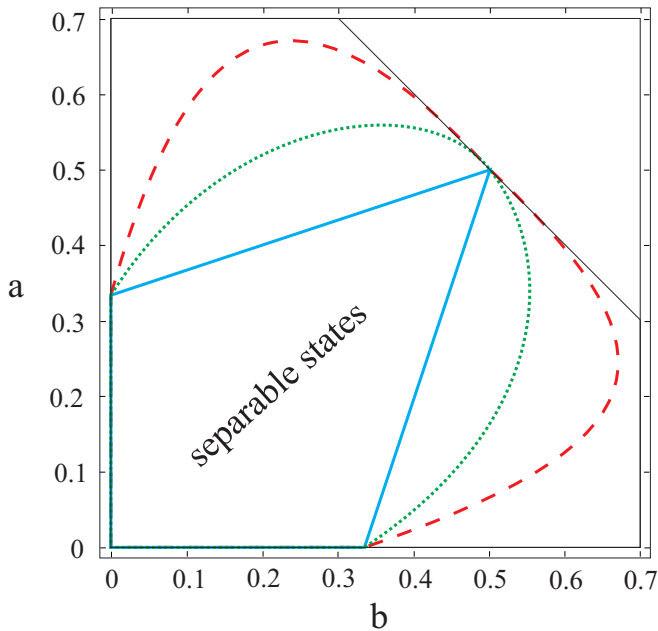}
\caption{Entanglement detection of Bell diagonal states \eqref{BellStates}. Separable states are enclosed in the central deltoid (blue solid line). The nonlinear indicator $\omega_{G}$ \eqref{MainCond} with standard metric \eqref{StdMet} detects entanglement of states outside the elipse (green dotted line). For comparison we present also the set of entangled states detected by the indicator \eqref{MainCond} generated by explicitly nonlinear map $\G$ \eqref{GNonLin} with $f(x)=x^3$ (red dashed line).}
\label{BDFig}
\end{figure}

\begin{figure}
\includegraphics[width=\columnwidth]{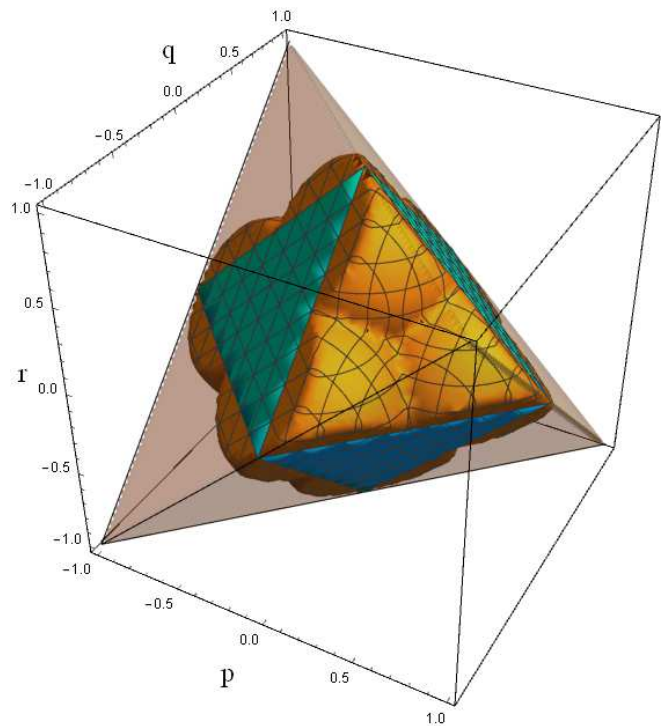}
\caption{Entanglement detection of Weyl states with standard metric $G$ \eqref{StdMet}. All physical states are enclosed in the grey tetrahedron. Separable states are enclosed in the central green octahedron. The nonlinear indicator $\omega_{G}$ \eqref{MainCond} detects entanglement of states outside the curved orange region.}
\label{WSFig}
\end{figure}

Insofar we discussed only one choice of the metric tensor $\hat G$, hence the question arises how one should search for a proper metric tensor to detect entanglement of a given class of states. As we saw in the examples, the standard metric tensor \eqref{StdMet} is not able to detect entanglement of all states from discussed families. Can we effectively find a metric tensor which detects entanglement of arbitrary bipartite state? The answer is negative. In \cite{BBLPZ08, LMPZ11} it was shown that for every state $\rho$ there exists a metric tensor $\hat G_{\textrm{opt}}$ that leads to detection of entanglement of this state, given by the following expression:
\be
\hat G_{\textrm{opt}}=|\rho-\rho_0)(\rho-\rho_0|,
\ee
in which $\rho_0$ denotes the closest separable state \emph{in the Hilbert-Schmidt metric} and the operator on the RHS is a projector in the Hilbert-Schmidt space of operators. Since the problem of finding the closest separable state to a given state is in general as complex as identifying whether the given state is separable or not, this choice of metric cannot be treated as effectively findable. In this point of the discussion we reach the main advantage of the general entanglement identifiers \eqref{MainCond}. We may try to guess the form of a map $\G$ and check whether it leads to a detection of entanglement of a given state. 

As we discussed in the previous section, every functional entanglement identifier of the form \eqref{MainCond} can be used to construct a positive map \eqref{MainMap}, which detects entanglement of at least the same states as the identifier. 
\begin{propo}
In the case of a linear map $\G$ \eqref{GLinFin}, the positive map corresponding to the identifier generated by $\G$, which is applied to a state $\rho$, is explicitly given by the formula:
\be
\Lambda_{\hat G,\rho}^d[\lambda]=\sum_{ijkl} \frac{1}{4}(\tilde{T}_{ij}-G^{kl}_{ij}T_{kl})\Tr\left(\sigma^B_j\lambda\right)\sigma^{A}_i,
\label{MapGLin}
\ee
in which the tensor $\tilde{T}$ is defined by:
\be
\tilde{T}_{ij}=
\begin{cases}
\frac{\sqrt{d_Ad_B}}{2}T_{\max}[\rho]\\
=\frac{\sqrt{d_Ad_B}}{2}\max_{\hat X,\hat Y} \sum_{klmn}X_kY_lG^{kl}_{mn}T_{mn}, \,\,i=j=0,\\
0 \textrm{ otherwise.}
\end{cases}\non
\label{TMaxDef}
\ee
\label{PropGTMax}
\end{propo}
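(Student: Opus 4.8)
The plan is to specialize the general dual-map formula \eqref{MainMap} of the Theorem to a linear $\G$, so that the entire task reduces to evaluating the matrix elements $w^{\G,\rho}_{ij}=\Tr(W_{\G}[\rho]\,\sigma^A_i\ten\sigma^B_j)$ of $W_{\G}[\rho]=\tilde\omega_0(\rho)\matone-\G[\rho]$ in the product operator basis, and then recognizing the two resulting pieces as $\tilde T_{ij}$ and $\sum_{kl}G^{kl}_{ij}T_{kl}$. By linearity of the trace I would split $w^{\G,\rho}_{ij}$ into the identity contribution $\tilde\omega_0(\rho)\,\Tr(\matone\,\sigma^A_i\ten\sigma^B_j)$ and the map contribution $-\Tr(\G[\rho]\,\sigma^A_i\ten\sigma^B_j)$, and handle each separately.

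For the map term I would insert the tensor form \eqref{GLinFin} of $\G[\rho]$ and apply the orthonormality $\Tr(\sigma_i\sigma_j)=2\delta_{ij}$ on both factors; the two factors of $2$ cancel the prefactor $\tfrac14$ and collapse the output indices, leaving exactly $\sum_{kl}G^{kl}_{ij}T_{kl}$. For the identity term I would use that the normalization $\Tr(\sigma_i\sigma_j)=2\delta_{ij}$ forces $\sigma^A_0=\sqrt{2/d_A}\,\openone_A$ and $\sigma^B_0=\sqrt{2/d_B}\,\openone_B$, whence $\Tr(\sigma^A_i)=\sqrt{2d_A}\,\delta_{i0}$ and $\Tr(\sigma^B_j)=\sqrt{2d_B}\,\delta_{j0}$, so that $\Tr(\matone\,\sigma^A_i\ten\sigma^B_j)=2\sqrt{d_A d_B}\,\delta_{i0}\delta_{j0}$. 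This already shows that the identity contribution is supported only on $i=j=0$, matching the support of $\tilde T$.

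The remaining step is to identify the coefficient $\tilde\omega_0(\rho)$ with $T_{\max}[\rho]$. Here I would parametrize an arbitrary pure product state as $\rho_A\ten\rho_B$ with $\rho_A=\tfrac12\sum_k X_k\sigma^A_k$ and $\rho_B=\tfrac12\sum_l Y_l\sigma^B_l$, substitute into $\tilde\omega_{\G}(\rho_A\ten\rho_B,\rho)=\Tr(\rho_A\ten\rho_B\,\G[\rho])$, and again use orthonormality; the product-state prefactor $\tfrac14$ together with the two traces yields $\tilde\omega_{\G}(\rho_A\ten\rho_B,\rho)=\tfrac14\sum_{klmn}X_kY_l G^{mn}_{kl}T_{mn}$, so that maximizing over $\hat X,\hat Y$ gives $\tilde\omega_0(\rho)=\tfrac14 T_{\max}[\rho]$. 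Combining with the identity piece, $2\sqrt{d_A d_B}\,\tilde\omega_0(\rho)=\tfrac{\sqrt{d_A d_B}}{2}T_{\max}[\rho]=\tilde T_{00}$, and substituting $w^{\G,\rho}_{ij}=\tilde T_{ij}-\sum_{kl}G^{kl}_{ij}T_{kl}$ into \eqref{MainMap} reproduces \eqref{MapGLin}.

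I expect the only genuine obstacle to be the bookkeeping of normalization constants: the factors of $2$ from $\Tr(\sigma_i\sigma_j)=2\delta_{ij}$, the $\tfrac14$ in the state expansion \eqref{ctensor}, and the $\sqrt{2/d}$ defining $\sigma_0$ must all be tracked consistently so that the $\tfrac14$ in $\tilde\omega_0$ and the $2\sqrt{d_A d_B}$ from the identity combine into the stated $\tfrac{\sqrt{d_A d_B}}{2}$. I would also flag one conceptual point: the index pattern $G^{kl}_{mn}$ appearing in the definition of $T_{\max}$, as opposed to the $G^{mn}_{kl}$ produced directly by the computation, is legitimate precisely because for the metric maps under consideration the positivity of $\G$ as a linear operator forces the symmetry $G^{ij}_{kl}=G^{kl}_{ij}$.
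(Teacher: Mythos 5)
Your proposal is correct and follows essentially the same route as the paper's own proof: compute $w^{\G,\rho}_{ij}=\Tr(W_{\G}[\rho]\,\sigma^A_i\ten\sigma^B_j)$, split it into the identity piece $2\sqrt{d_Ad_B}\,\tilde\omega_0(\rho)\,\delta_{i0}\delta_{j0}$ and the map piece $-\sum_{kl}G^{kl}_{ij}T_{kl}$ via the orthonormality $\Tr(\sigma_i\sigma_j)=2\delta_{ij}$, and substitute into \eqref{MainMap}. You are in fact slightly more explicit than the paper, which simply asserts $2\sqrt{d_Ad_B}\,\tilde\omega_0=\tfrac{\sqrt{d_Ad_B}}{2}T_{\max}[\rho]$ rather than deriving $\tilde\omega_0=\tfrac14 T_{\max}[\rho]$ from the Bloch parametrization of product states, and your remark about the index order $G^{kl}_{mn}$ versus $G^{mn}_{kl}$ (resolved by the symmetry of the metric tensor) flags a detail the paper passes over silently.
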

\begin{proof}
See Appendix \ref{ProofGTMax}.
\end{proof}
Let us discuss two simple examples of such maps in the case of two-qubit systems taken for a standard metric tensor \eqref{StdMet} and two different initial states: the singlet state and the Werner state. 
The non-vanishing correlation tensor components of the Bell's singlet state ($| \phi^{-} \rangle = \frac{1}{\sqrt{2}} (|01 \rangle - |10 \rangle)$) are: $T_{xx} = T_{yy} = T_{zz} = - 1$. Hence, the map \eqref{MapGLin} takes the following form:
\be
\Lambda^{\textrm d}_{\hat G,|\phi^{-} \rangle}[\rho]&=& \frac{T_{max}[\rho]}{4} \Tr{(\rho)}\matone + \frac{1}{4} \Tr{(\sigma_x x)}\sigma_x \non
&&- \frac{1}{4} \Tr{(\sigma_y x)}\sigma_y + \frac{1}{4} \Tr{(\sigma_z x)}\sigma_z,
\label{SingletMap}
\ee
in which the function $T_{\max}$ is defined in \eqref{TMaxDef}. Applying the transformation \eqref{MainCondMap} with the above positive map to a singlet state, one gets an operator with eigenvalues $\lbrace -\frac{1}{4}, \frac{1}{4}, \frac{1}{4}, \frac{1}{4} \rbrace$. Therefore the map \eqref{SingletMap} detects entanglement of the singlet state. On the other hand it also detects the entanglement of the Werner state:
\be
\rho_W = v | \phi^{+} \rangle \langle \phi^{+} | + \frac{1-v}{4} \matone, 
\label{WerState}
\ee
in which $| \phi^{+} \rangle = \frac{1}{\sqrt{2}}(| 00 \rangle + | 11 \rangle)$.
Indeed, the eigenvalues of the operator $\left(\openone\ten\mathcal T\circ\Lambda^{\textrm d}_{\hat G,|\phi^{-} \rangle}\right)[\rho_W]$  are $\left\{\frac{1}{8} (1-3 v),\frac{v+1}{8},\frac{v+1}{8},\frac{v+1}{8}\right\}$, hence the entanglement of $\rho_W$ is detected for $v>\tfrac{1}{3}$. Now let us construct the map \eqref{MainMap} starting from the Werner's state \eqref{WerState}. The non-vanishing correlation tensor components of the Werner's state are: $T_{xx} = T_{zz} = - T_{yy} = v$. Hence, the map \eqref{MainMap} takes the following form:
\be
\Lambda^{\textrm d}_{\hat G, \rho_{W}}[\rho] &=& \frac{T_{max}[\rho]}{4} \Tr{(\rho)}\matone - \frac{v}{4} \Tr{(\sigma_x x)}\sigma_x \non
&&- \frac{v}{4} \Tr{(\sigma_y x)}\sigma_y - \frac{v}{4} \Tr{(\sigma_z x)}\sigma_z
\label{WernerMap}
\ee
Applying the transformation \eqref{MainCondMap} to the Werner state \eqref{WerState} one obtains operator with eigenvalues \newline $ \left\{\frac{1}{8} (1-3 v) v,\frac{1}{8} v (v+1),\frac{1}{8} v (v+1),\frac{1}{8} v(v+1)\right\}$, which means that the map \eqref{WernerMap} detects entanglement of a Werner state for $v>\tfrac{1}{3}$. On the other hand we may apply the condition \eqref{MainCondMap} with the map \eqref{WernerMap} to a singlet state, which gives an operator with
eigenvalues $\left\{-\frac{v}{4},\frac{v}{4},\frac{v}{4},\frac{v}{4}\right\}$. We see that the map generated by a Werner state \eqref{WernerMap} can detect entanglement of a singlet state for all values of $v$.

Although the above examples of positive maps are very simple, we see that a different choice of an initial state generates different map for the same metric. Can we choose the metric tensor in a way which would lead to the same map for arbitrary state? We will show that such choice is possible only for a nonlinear and non-positive map $\G$.

\subsection{New nonlinear geometric entanglement criterion}

In the previous section we discussed several examples of entanglement indicators \eqref{MainCond}
generated by some linear maps $\G$. They led us to experimentally friendly entanglement criteria, which however were not optimal for discussed classes of states. In this section we introduce a new entanglement indicator, based on the following explicitly \emph{nonlinear} map $\G$:
\be
\label{GSign}
\G[\rho]=\frac{1}{4}\sum_{kl} \left(\sum_{ij}G^{ij}_{kl}\textrm{sgn}(T_{ij})\right)\sigma^A_k\ten\sigma^B_l,
\ee
which depends only on the sign of the elements of the correlation tensor, but not on its values. The metric tensor in the above formula is assumed to be the standard one \eqref{StdMet}. The entanglement identifier \eqref{MainCond} generated by the map \eqref{GSign} has the following direct form:
\be
\frac{\sqrt{d_Ad_B}}{2}\max_{X^A,Y^B}\sum_{i,j=1}^{d_Ad_B-1}X^A_iY^B_j\textrm{sgn}(T_{ij})<\sum_{i,j=1}^{d_Ad_B-1}|T_{ij}|,\non
\label{SignCond}
\ee
in which $X^A$ and $Y^B$ are local Bloch vectors. The above condition can be treated as a geometric entanglement identifier with a specific choice of the metric, namely the RHS of \eqref{SignCond} defines a norm called the Manhattan norm, which gives rise to a distance measure on the space of correlation tensors called the Manhattan (or taxicab) distance.

The entanglement identifier \eqref{SignCond} turns out to be extremely useful. In the case of two discussed families of two-qubit states (see Fig. \ref{BDFig} and \ref{WSFig}) it perfectly detects entanglement of all the states from both the families. This is in sharp contrast with nonlinear entanglement indicators proposed in \cite{Moroder08} which achieve the same aim in the limit of infinitely many improvements to the entanglement witness. The discussed identifier \eqref{SignCond} is also useful for higher dimensional systems. Let us discuss the following family of two two-qutrit Werner-type states:
\be
\rho_{QtW}=v\proj{\psi(\alpha,\beta)}-\frac{1-v}{9}\matone,
\label{QtritWerStates}
\ee
in which the pure two-qutrit state $\psi(\alpha,\beta)$ is given by:
\be
\ket{\psi(\alpha,\beta)}&=&\sin(\alpha)\cos(\beta)\ket{00}+\sin(\alpha)\sin(\beta)\ket{11}\non
&&+\cos(\alpha)\ket{22}.
\ee
If we fix $\beta=\pi/4$ we get a two-parameter family of states $\rho_{QtW}(v,\alpha)$, which can be represented on a plane with two orthogonal axes corresponding to values of $v$ and $\alpha$. The entanglement of this family of states can be uniquely detected via PPT condition \cite{Pittenger00}. It turns out, that experimentally friendly entanglement identifier \eqref{SignCond} is not optimal, however it is much better than the one based on linear map $\G$ \eqref{StdMet} (see Fig. \ref{QTWerFig} for comparison).

\begin{figure}
\includegraphics[width=\columnwidth]{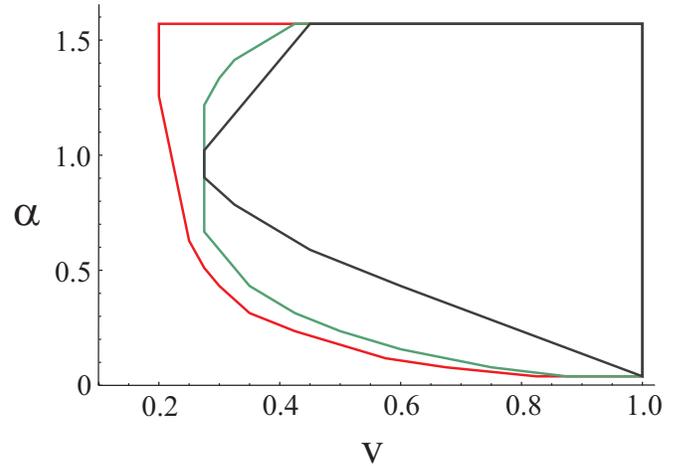}
\caption{Entanglement detection of two-qutrit Werner states \eqref{QtritWerStates}. The set of all entangled states is depicted by the outermost region (red line). The set of entangled states detected by the condition \eqref{SignCond} is depicted by the middle region (green line), whereas the set of entangled states detected by the condition \eqref{MainCond} with a standard metric \eqref{StdMet} is depicted by the inner region (black line).}
\label{QTWerFig}
\end{figure}

\subsection{Nonlinear entanglement indentifiers and the PPT criterion}

In the previous sections we discussed how to transform the functional entanglement identifier \eqref{MainCond} into a positive-map-based condition \eqref{MainCondMap}. The positive map \eqref{MainMap}  corresponding to a given functional identifier has a very peculiar form, namely it depends both on the map $\G$ generating the identifier, and on the state on which the identifier acts. The question arises if we can to some extent reverse the construction, namely fix the form of a positive map, and search for a definite form of $\G$ which would generate this map for every state $\rho$. The most interesting case is the PPT map $\openone\ten\mathcal T$, which for two qubits and qubit-qutrit systems is a universal entanglement identifier. Note that our map-based condition \eqref{MainCondMap} is equivalent to the PPT condition, if and only if the map $\Lambda^{\textrm d}_{\G,\rho}$ \eqref{MainMap} is an identity map. Can we find a form of $\G$ which would generate such a map for every state? In general solving such a problem would require solving a system of highly nonlinear equations, however in the case of two-qubit systems the solution can be quite easily guessed:
\begin{propo}
\label{PropPPT}
Let us assume that $d_A=d_B=2$, and we choose the same operator basis $\{\sigma_i\}$ for both subsystems.
Then the choice of a nonlinear map $\G$ \eqref{GNonLin} with tensor $G^{ij}_{kl}=-\delta_{ij}\delta_{jl}\delta_{ik}[\{i,j,k,l\}\neq 0]$, and a constant function  $f(x)=\textrm{const}(x)=1$:
\be
\label{GPPT}
\G_{\textrm{PPT}}[\rho]&=&-\frac{1}{4}\sum_{kl=1}^{3} \left(\sum_{ij=1}^{3}\delta_{ij}\delta_{jl}\delta_{ik}\textrm{const}(T_{ij})\right)\sigma^k\ten\sigma^l\non
&=&-\frac{1}{4}\sum_{i=1}^{3}\sigma_i\ten\sigma_i,
\ee
generates an identity map \eqref{MainMap}: $$\Lambda^{\textrm d}_{\G_{\textrm{PPT}},\rho}[\lambda]=\lambda$$ for any $\rho$ and $\lambda$.
\end{propo}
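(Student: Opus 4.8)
The plan is to exploit the fact that, for $f=\textrm{const}(x)=1$, the map $\G_{\textrm{PPT}}$ is \emph{state-independent}: substituting $f(T_{ij})=1$ into \eqref{GNonLin} with the given tensor $G^{ij}_{kl}=-\delta_{ij}\delta_{jl}\delta_{ik}[\{i,j,k,l\}\neq 0]$ collapses the sum to $\G_{\textrm{PPT}}[\rho]=-\tfrac14\sum_{i=1}^3\sigma_i\ten\sigma_i$, which carries no dependence on the correlation tensor of $\rho$. Consequently every object built from $\G_{\textrm{PPT}}$ --- the scalar $\tilde\omega_0(\rho)$, the witness $W_{\G}[\rho]$, its matrix elements $w^{\G,\rho}_{ij}$, and hence the dual map $\Lambda^{\textrm d}_{\G_{\textrm{PPT}},\rho}$ --- is likewise independent of $\rho$. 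This reduces the claim ``for any $\rho$'' to a single computation, which is the conceptual heart of the proposition.

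First I would evaluate $\tilde\omega_0$ from its definition \eqref{OmegaMax}--\eqref{OmegaProd}. Parametrising a pure product state as $\rho_A\ten\rho_B$ with Bloch vectors $\vec a,\vec b$ of unit length, one gets $\tilde\omega_{\G}(\rho_A\ten\rho_B,\rho)=-\tfrac14\,\vec a\cdot\vec b$, so that the maximum over product states is attained for antiparallel vectors and gives the constant value $\tilde\omega_0=\tfrac14$. Then I would form $W_{\G}[\rho]=\tilde\omega_0\matone-\G_{\textrm{PPT}}[\rho]=\tfrac14\matone+\tfrac14\sum_{m=1}^3\sigma_m\ten\sigma_m$ and compute its coefficients $w^{\G,\rho}_{ij}=\Tr(W_{\G}[\rho]\,\sigma_i\ten\sigma_j)$ via \eqref{WCoef}, using only orthonormality $\Tr(\sigma_i\sigma_j)=2\delta_{ij}$ together with $\Tr(\sigma_i)=2\delta_{i0}$. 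The identity term then contributes solely to $w_{00}$ while the $\sum_m\sigma_m\ten\sigma_m$ term contributes solely to the diagonal entries $w_{ii}$ with $i\in\{1,2,3\}$; the upshot is that $w^{\G,\rho}_{ij}$ is \emph{diagonal}, $w^{\G,\rho}_{ij}\propto\delta_{ij}$, with all diagonal entries equal.

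Finally I would substitute this diagonal tensor into the explicit dual-map formula \eqref{MainMap}. Because $w^{\G,\rho}_{ij}$ is diagonal the double sum collapses to $\Lambda^{\textrm d}_{\G_{\textrm{PPT}},\rho}[\lambda]\propto\sum_i\Tr(\sigma_i\lambda)\sigma_i$, and the standard single-qubit Bloch reconstruction identity $\lambda=\tfrac12\sum_i\Tr(\sigma_i\lambda)\sigma_i$ (valid for the chosen normalisation, with $i$ ranging over $0,1,2,3$) turns this into the identity channel, independently of $\lambda$. The main obstacle I anticipate is not conceptual but a careful bookkeeping of normalisation prefactors: the factor $\tfrac14$ in \eqref{MainMap}, the $2$ in $\Tr(\sigma_i\sigma_j)=2\delta_{ij}$, the $\tfrac14$ in $\G_{\textrm{PPT}}$, and the value $\tilde\omega_0=\tfrac14$ all have to conspire to fix the overall constant. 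Getting every factor of two right --- so that the coefficients land on precisely the value for which \eqref{MainMap} reproduces $\lambda$ rather than a positive multiple of it --- is where I would expect to spend the most care, since any positive rescaling would still detect the same entanglement but would fail to give the exact \emph{identity} map asserted in the statement.
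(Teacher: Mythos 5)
Your proposal is correct and follows essentially the same route as the paper's own proof: evaluate $\tilde\omega_0(\rho)=\tfrac14$ over product states, observe that $W_{\G}[\rho]$ is state-independent with coefficients $w^{\G,\rho}_{ij}=\delta_{ij}$, and conclude via the operator-basis reconstruction of $\lambda$ that the dual map \eqref{MainMap} acts as the identity. The normalization worry you flag at the end is well founded: with $\Tr(\sigma_i\sigma_j)=2\delta_{ij}$ the reconstruction reads $\lambda=\tfrac12\sum_i\Tr(\sigma_i\lambda)\sigma_i$, so the $\tfrac14$ prefactor in \eqref{MainMap} combined with $w_{ij}=\delta_{ij}$ actually yields $\tfrac12\lambda$ rather than $\lambda$ --- a factor-of-two slip that is present in the paper's own proof as well and that rescales, but does not otherwise alter, the map and its entanglement-detection power.
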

\begin{proof}
See Appendix \ref{ProofPPT}.
\end{proof}
Note that the map $\G_{\textrm{PPT}}$ is neither linear nor generates it a metric \eqref{GPos}. It gives rise to the following functional entanglement indicator \eqref{MainCond}:
\be
\sum_{i=1}^3 T_{ii}<-1.
\ee
The above indicator detects entanglement of a singlet state (for which $T_{ii}=-1$ and therefore the LHS is $-3$), however it fails to detect entanglement of other Bell states. On the contrary the PPT map detects entanglement of all two-qubit states.

\section{The case of multipartite entanglement}

Insofar we discussed the bipartite entanglement. However, as discussed in \cite{LMPZ11}, the metric-based entanglement criteria \eqref{tensormulti} work for detection of arbitrary level of partial separability in the multipartite case. One of the most important conclusions from \cite{LMPZ11} is that in order to prove a genuine entanglement one needs to reject biseparability with respect to all possible bipartitions using the metric-based conditions \eqref{MainCondGLin}. Is it possible to use all the introduced framework of non-linear functional entanglement identifiers to the multipartite case? The answer is positive, however in such a generalization we loose one important property. Namely the connection with positive-map based condition fails in this case. Let us discuss it in more details in the case of checking genuine entanglement by rejecting biseparability. One can still use the relation \eqref{MapGLin} to construct a positive-map-based criterion for any bipartition. The problem is that the functional identifier \eqref{MainCond} applied to all bipartitions detects only genuine entanglement, whereas the positive map \eqref{MapGLin} corresponding to it can detect also a bipartite entanglement, which is not interesting in this case.  Let us show it directly on the example of a three-qubit W state $| W \rangle = \frac{1}{\sqrt{3}} (|100 \rangle + |010 \rangle + |001 \rangle)$ mixed with a white noise:
\be
\rho_W(v)=v\proj{W}+\frac{1-v}{8}\matone.
\label{Wnoise}
\ee
Since the above state is symmetric with respect to all subsystems, the values of the entanglement identifiers must be the same for all bipartitions. Let us then fix the bipartition $\mathcal A\mathcal B|\mathcal C$, where $\mathcal A,\mathcal B,\mathcal C$ represent state spaces of the consecutive qubits. According to \cite{LMPZ11} the entanglement identifier \eqref{MainCondGLin} with a standard metric tensor $G^{i_1,i_2,i_3}_{j_1,j_2,j_3}=\delta_{i_1,j_1}\delta_{i_2,j_2}\delta_{i_3,j_3}$ rejects biproduct character of the state within a bipartition $\mathcal A\mathcal B|\mathcal C$ for $v>0.636$. Since rejecting a biproduct character of a state with respect to all bipartitions using the metric-based identifiers is equivalent to rejecting a biseparability of a state, the metric-based condition detects genuine 3-partite entanglement of a state \eqref{Wnoise} for $v>0.636$. However, this is not the optimal value, which is given by $v>0.521$ \cite{Moroder11}. Now let us consider the map-based version of this metric identifier \eqref{MapGLin}, with subsystem $A=\mathcal A\mathcal B$ and $B=\mathcal C$. As a basis for subsystem $A$ we choose the product basis $\sigma_{i(i_1,i_2)}^A=\sigma_{i_1}^{\mathcal A}\ten \sigma_{i_2}^{\mathcal B}$. Then the map \eqref{MapGLin} takes the following form:
\be
\Lambda^{\textrm d}_{\hat G,\rho_W(v)}[\lambda]=\frac{1}{4}\tilde T_{00}\matone^{\mathcal A\mathcal B}-\sum_{i_1,i_2,j=1}^3 T_{i_1i_2j}\Tr(\sigma_j^{\mathcal C}\lambda)\sigma_{i_1}^{\mathcal A}\ten \sigma_{i_2}^{\mathcal B}.\non
\label{MapGLinW}
\ee
The map-based condition \eqref{MainCondMap} using the above map \eqref{MapGLinW} to the state \eqref{Wnoise} detects entanglement for $v>0.456$. Since this value is below the genuine entanglement value $v=0.521$, it is clear that the map \eqref{MapGLinW} detects a bipartite entanglement present in the state \eqref{Wnoise}. The reason for this is that the map \eqref{MapGLinW} detects bipartite entanglement for two pairs of subsystems $\{\mathcal A,\mathcal C\}$ and $\{\mathcal B,\mathcal C\}$. This entanglement is not detected by the metric-based condition, since the entire state $\rho_{W}(v<0.521)$ --- although containing bipartite entanglement --- is biseparable, and the metric-based condition \eqref{MainCond} for detecting genuine entanglement is by definition positive on biseparable states. Therefore detecting genuine entanglement with map-based conditions demands a redefinition of a condition \eqref{MainCondMap} in order to make the LHS of this condition positive for biseparable states \cite{Huber14, Clivas16}.

\section{Discussion}
In this note we presented a generalized purely algebraic approach to nonlinear entanglement identifiers, which in many cases turns out to be experimentally friendly. We showed that such identifiers can be always transformed to a positive map-based versions. The map-based versions are similar in form to the former known map criteria due to Horodecki \cite{HHH96, HHH01}, however the maps are explicitly adjusted to the investigated states. This property implies, that the final criteria involve nonlinear functions of investigated states. The provided relations allow for explicit constructions of positive maps which can detect entanglement of a broad class of states. It would be interesting to find conditions for the functional identifiers under which the corresponding positive maps have important specific properties, like decomposability or extremality.

\section{Acknowledgements}
We thank Marek Ku\'s and Karol \.Zyczkowski for inspiring discussions. MM is supported by the  
National Science Centre, Poland, grant number 2015/16/S/ST2/00447 within the 
project FUGA 4 for postdoctoral training. AR, AK and WL are supported by the  National Science Centre, Poland, Grant No. 2014/14/M/ST2/00818.
ZP is supported by the  National Science Centre, Poland, grant number 
2014/15/B/ST6/05204.

\appendix
\section{Choi-Jamiolkowski isomorphism for arbitrary vector spaces}
\label{ApChoi}
We introduce a version of the Choi-Jamiolkowski isomorphism \cite{Jamiolkowski72, Choi75, Choi82}, which works for any vector spaces of finite dimension. Let us assume $\Ham_A$ and $\Ham_B$ are two finite dimensional vector spaces enowed with a scalar product:
\be
|v_1)\cdot |v_2)\equiv (v_1|v_2).
\ee
We will use a bra-ket type notation, although we need not specify to which physical objects the vectors $|v_1)$ and $|v_2)$ correspond. There exists a unique algebraic tensor product $\Ham_A \ten \Ham_B$, which is the freest vector space composed of formal objects $|v_1)\ten|v_2)$ that is bilinear (linear in both factors). We also assume that for every vector $|v)$ there exits its dual $(v|\in \Ham^*$, which plays the role of a linear functional on the space $\Ham$:
\be
(v|(x)=(v|x).
\ee
\begin{propo}
There exists a natural isomorphism between the vector  spaces $\Ham_A\ten\Ham_B$, $\Ham_B\ten\Ham_A^*$ and $\Ll(\Ham_A\mapsto\Ham_B)$,  represented by the objects:
\be
\sum_{i,j}v_{ij}|i)\ten|j),\,\sum_{i,j}v_{ij}|j)\ten(i|,\,\sum_{i,j}v_{ij}(i|\cdot)|j),
\ee
which is specified by the bijections:
\be
\sum_{i,j}v_{ij}|i)\ten|j)&=&\left(\openone\ten\sum_{i,j}v_{ij}|j)\ten(i|\right)\sum_{m}|m)\ten|m),\non
&=&\left(\openone\ten\sum_{i,j}v_{ij}(i|\cdot)|j)\right)\sum_{m}|m)\ten|m),\non
\label{iso1}
\ee 
in which the object $\sum_{m}|m)\ten|m)$ is  the \emph{maximally entangled} element of $\Ham_A\ten\Ham_A$.
\label{propoiso1}
\end{propo}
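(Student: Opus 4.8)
The plan is to prove the three-way correspondence in two stages: first establish a basis-free identification between $\Ham_B\ten\Ham_A^*$ and $\Ll(\Ham_A\mapsto\Ham_B)$, and then verify the two displayed equalities by a single direct computation on the maximally entangled element, after which bijectivity with $\Ham_A\ten\Ham_B$ follows from a basis-counting argument.

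First I would pin down the meaning of the third object. The elementary tensor $|j)\ten(i|\in\Ham_B\ten\Ham_A^*$ is made to act on $|x)\in\Ham_A$ by $(|j)\ten(i|)\,|x)=(i|x)\,|j)$, i.e.\ as the rank-one operator $|j)(i|$; extending bilinearly sends $\sum_{ij}v_{ij}|j)\ten(i|$ to the map $\sum_{ij}v_{ij}(i|\cdot)|j)\in\Ll(\Ham_A\mapsto\Ham_B)$. This assignment is linear and references no basis, and it is a bijection because the rank-one operators $\{|j)(i|\}$ are a basis of $\Ll(\Ham_A\mapsto\Ham_B)$ and are precisely the images of the basis $\{|j)\ten(i|\}$. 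Consequently the second and third objects name the same linear object, and the two lines in the statement reduce to one computation written in two notations.

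The core step is the verification of the displayed identity. Applying $\openone\ten\sum_{ij}v_{ij}(i|\cdot)|j)$ to $\sum_m|m)\ten|m)$ and using bilinearity of $\ten$ yields $\sum_{ijm}v_{ij}(i|m)\,|m)\ten|j)$. I would then invoke that $\{|i)\}$ is orthonormal for the scalar product, so that $(i|m)=\delta_{im}$ collapses the sum over $m$ and reproduces exactly $\sum_{ij}v_{ij}|i)\ten|j)$, the first object. Running the same computation through the $\Ham_B\ten\Ham_A^*$ representation gives the first equality, so both lines hold at once.

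Finally I would read off the isomorphism with $\Ham_A\ten\Ham_B$: the linear map $\sum_{ij}v_{ij}|i)\ten|j)\mapsto\sum_{ij}v_{ij}(i|\cdot)|j)$ carries the basis $\{|i)\ten|j)\}$ onto the basis $\{|j)(i|\}$, hence is a bijection, and composing with the first-stage identification produces the full chain of isomorphisms. The one point I expect to need care is the status of the word \emph{natural}: the $\Ham_B\ten\Ham_A^*\cong\Ll(\Ham_A\mapsto\Ham_B)$ leg is canonical, while the leg through $\Ham_A\ten\Ham_B$ is mediated by $\sum_m|m)\ten|m)$, whose definition invokes a basis. I would therefore check that this element is in fact independent of the orthonormal basis chosen (equivalently, that it is the image of the identity operator under the musical map $|m)\mapsto(m|$ induced by the scalar product), which legitimises treating the whole correspondence as essentially basis-free.
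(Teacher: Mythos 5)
Your proposal is correct and follows essentially the same route as the paper: the paper likewise establishes the abstract isomorphism by a dimension count and then verifies the displayed bijections by exactly your computation, applying $\openone\ten\sum_{ij}v_{ij}(i|\cdot)|j)$ to $\sum_m|m)\ten|m)$ and collapsing the sum with $(i|m)=\delta_{im}$. Your added remarks on the canonical $\Ham_B\ten\Ham_A^*\cong\Ll(\Ham_A\mapsto\Ham_B)$ identification and on the basis-independence of the maximally entangled element are sound refinements but do not change the argument.
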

\begin{proof}
All the vector spaces $\Ham_A\ten\Ham_B$, $\Ham_A\ten\Ham_B^*$ and $\Ll(\Ham_B\mapsto\Ham_A)$ have dimension $\operatorname{dim}(\Ham_A)\operatorname{dim}(\Ham_B)$, therefore are isomorphic. The special forms of bijection between them provided by \eqref{iso1} hold trivially:
\be
&&\left(\openone\ten\sum_{i,j}v_{ij}|j)\ten(i|\right)\sum_{m}|m)\ten|m),\non
&&=\left(\openone\ten\sum_{i,j}v_{ij}(i|\cdot)|j)\right)\sum_{m}|m)\ten|m)\non
&&=\sum_m |m)\ten\sum_{i,j}v_{ij}(i|m)|j)\non
&&=\sum_{m,i,j}v_{ij}(i|m) |m)\ten |j)=\sum_{m,i,j}v_{ij}\delta_{im} |m)\ten |j)\non
&&=\sum_{i,j}v_{ij}|i)\ten |j).
\label{iso2}
\ee 
In the first step we use the natural identification of $\Ham_B\ten\mathbb C$ with $\Ham_B$.
\end{proof}

\section{Proof of Proposition \ref{propLdual}}
\label{proofLdual}
\begin{proof}
\be
\Tr\left(\alpha_1\Lambda_{\G,\rho}[\alpha_2]\right)&=&\Tr\left(\alpha_1\sum_{ij} w_{ij}^{\G,\rho}\,\Tr\left(\sigma^A_i\alpha_2\right)\sigma^B_j\right)\non
&=&\sum_{ij} w_{ij}^{\G,\rho}\,\Tr\left(\alpha_1\sigma^B_j\right)\Tr\left(\sigma^A_i\alpha_2\right).
\ee
On the other hand:
\be
\Tr\left(\Lambda_{\G,\rho}^d[\alpha_1]\alpha_2\right)&=&\Tr\left(\sum_{ij} w_{ij}^{\G,\rho}\,\Tr\left(\sigma^B_j\alpha_1\right)\sigma^{A}_i\alpha_2\right)\non
&=&\sum_{ij} w_{ij}^{\G,\rho}\,\Tr\left(\alpha_1\sigma^B_j\right)\Tr\left(\sigma^A_i\alpha_2\right).
\ee
\end{proof}


\section{Proof of Proposition \ref{propFiPlus}}
\label{proofFiPlus}
\begin{proof}
First let us note, that  due to the fact $\ts^2=\openone$, the formula \eqref{fiplusone} is equivalent to:
\be
\sum_{m}\sigma_m\ten \sigma_m^T=\rho_{\Phi^+}.
\label{fiplusone1}
\ee
Since $\{ \sigma_i \}_{i=1}^{d^2}$ forms a Hermitian orthonormal basis of $d 
\times d $ matrices, then vectors $\{ \vket{\sigma_i} \}_{i=1}^{d^2}$ form an 
orthonormal basis of $\mathbb{C}^{d^2} = \mathbb{C}^{d}\otimes \mathbb{C}^{d}$,
where the vector $\vket{\rho}$ denotes vectorized (reshaped in lexicographical 
order) version of a matrix $\rho$.
The orthonormality and completeness ensures us that  
\begin{equation}
\sum_m \vket{\sigma_m} \vbra{\sigma_m} = \matone_{d^2},
\end{equation} 
which in terms of matrix elements in computational basis reads
\begin{equation}
\begin{split}
&\bra{ij}\sum_m \vket{\sigma_m} \vbra{\sigma_m} {kl}\rangle =
\sum_m \langle{ij} \vket{\sigma_m} \vbra{\sigma_m} {kl}\rangle \\
&= \sum_m \bra{i} \sigma_m \ket{j} \bra{k} \overline{\sigma}_m \ket{l} 
= \sum_m \bra{i} \sigma_m \ket{j} \bra{k} \sigma_m^T \ket{l} 
\\
&= \delta_{ik} \delta_{jl}.
\end{split}
\end{equation} 
Now, using the above we write elements of matrix $\sum_{m} \sigma_m \ten 
\sigma_m^T$ and show that they are equal to elements of matrix $\rho_{\Phi^+}$,
\begin{eqnarray}
&& \bra{ij}\sum_{m} \sigma_m \ten \sigma_m^T \ket{kl} = 
\sum_m \bra{i} \sigma_m \ket{k} \bra{j} \sigma_m^T \ket{l} = 
\delta_{ij}\delta_{kl}  \nonumber \\ 
&& = \bra{ij} \left(\sum_{m n} \ket{mm}\bra{nn}\right)  \ket{kl} 
= \bra{ij}  \rho_{\Phi^+} \ket{kl}.
\end{eqnarray}
Which gives us the result.
\end{proof}

\section{Proof of Proposition \ref{PropGTMax}}
\label{ProofGTMax}
\begin{proof}
It suffices to show that:
\be
w_{ij}^{\G,\rho}=\tilde T_{ij}-\sum_{kl}G^{kl}_{ij}T_{kl},
\ee
in which the only nonzero element of the tensor $\tilde T$ is $\tilde T_{00}=\frac{\sqrt{d_Ad_B}}{2}\max_{\hat X,\hat Y}\sum_{klmn}X_kY_lG^{kl}_{mn}T_{mn}$.
Taking the definition of $w_{ij}^{\G,\rho}$ we have:
\be
w_{ij}^{\G,\rho}&=&\Tr(W_{\G}[\rho]\sigma^A_i\ten\sigma^B_j)=\Tr((\tilde\omega_0\matone-\G[\rho])\sigma^A_i\ten\sigma^B_j)\non
&=&\tilde\omega_0\Tr(\sigma^A_0)\Tr(\sigma^B_0)-\Tr(\G[\rho]\sigma^A_i\ten\sigma^B_j)\non
&=&2\sqrt{d_Ad_B}\tilde\omega_0-\sum_{kl}G^{kl}_{ij}T_{kl}\non
&=&\frac{\sqrt{d_Ad_b}}{2}\max_{\hat X,\hat Y}\sum_{klmn}X_kY_lG^{kl}_{mn}T_{mn}-\sum_{kl}G^{kl}_{ij}T_{kl}.\non
\ee

\end{proof}

\section{Proof of Proposition \ref{PropPPT}}
\label{ProofPPT}
\begin{proof}
For $\Ham_A=\Ham_B$ and a choice of local operator basis in the form $\{\sigma_i\}$ the map \eqref{MainMap} is an identity map if and only if $w_{ij}^{\G,\rho}=\delta_{ij}$. Indeed, for any $\lambda$ the action of this map reads:
\be
\Lambda_{\G,\rho}^d[\lambda]&=&\sum_{ij} \frac{1}{4}w_{ij}^{\G,\rho}\,\Tr\left(\sigma_j\lambda\right)\sigma_i=\sum_{ij} \frac{1}{4}\delta_{ij}\,\Tr\left(\sigma_j\lambda\right)\sigma_i\non
&=&\sum_{i}\Tr\left(\frac{1}{4}\sigma_i\lambda\right)\sigma_i=\lambda
\ee
and represents a decomposition of $\lambda$ in the operator basis $\{\sigma_i\}$. Therefore it suffices to show that a choice of the map $\G_{\textrm{PPT}}$ in the form \eqref{GPPT} guarantees that $w_{ij}^{\G,\rho}=\delta_{ij}$. Let us check it directly:
\be
w_{ij}^{\G,\rho}&=&\Tr(W_{\G}[\rho]\sigma_i\ten\sigma_j)=\Tr((\tilde\omega_0(\rho)\openone-\G[\rho])\sigma_i\ten\sigma_j)\non
&=&\max_{\hat X,\hat Y}\left(-\sum_{klmn}X_kY_l\right)[\{i,j\}=0]+\sum_{kl}\delta_{ij}\delta_{jl}\delta_{ik}\non
&=&[\{i,j\}=0]+\delta_{ij}[\{i,j\}\neq 0]=\delta_{ij}.
\ee
\end{proof}

\end{document}